\documentclass[11pt]{article}

\usepackage[margin=1in]{geometry}
\usepackage[T1]{fontenc}
\usepackage[utf8]{inputenc}
\usepackage{lmodern}
\usepackage{microtype}
\usepackage{amsmath,amssymb,amsthm,mathtools}
\usepackage{booktabs,tabularx,array,float}
\usepackage{enumitem}
\usepackage[numbers,sort&compress]{natbib}
\usepackage[hidelinks]{hyperref}
\usepackage{xurl}
\hypersetup{
  pdftitle={Target-Dependent Local Verification: Information--Proof-Length Tradeoffs},
  pdfauthor={Hongmin Li}
}

\numberwithin{equation}{section}
\newtheorem{theorem}{Theorem}[section]
\newtheorem{lemma}[theorem]{Lemma}
\newtheorem{corollary}[theorem]{Corollary}
\newtheorem{proposition}[theorem]{Proposition}
\theoremstyle{definition}

\theoremstyle{remark}
\newtheorem{remark}[theorem]{Remark}
\newtheorem{example}[theorem]{Example}

\newcommand{\bits}{\{0,1\}}
\newcommand{\supp}{\operatorname{supp}}
\newcommand{\VCdim}{\operatorname{VCdim}}
\newcommand{\fibdim}{D_{\mathrm{fib}}}
\newcommand{\poly}{\operatorname{poly}}
\newcommand{\Prb}{\mathop{\Pr}}
\newcommand{\distH}{d_{\mathrm H}}
\newcommand{\Bbin}{\mathcal B}
\newcommand{\Lrldc}{\mathcal L}
\newcommand{\eps}{\varepsilon}

\title{Target-Dependent Local Verification:\\ Information--Proof-Length Tradeoffs}
\author{%
Hongmin Li\thanks{Corresponding author: \href{mailto:li.hongmin.xa@alumni.tsukuba.ac.jp}{li.hongmin.xa@alumni.tsukuba.ac.jp}; ORCID: \href{https://orcid.org/0000-0003-0228-0600}{0000-0003-0228-0600}}\\[0.4em]
\small School of Life Science and Technology, Institute of Science Tokyo\\
\small 2-12-1 Ookayama, Meguro-ku, Tokyo 152-8550, Japan\\[0.4em]
\small Department of Computational Biology and Medical Sciences,\\
\small Graduate School of Frontier Sciences, The University of Tokyo\\
\small 5-1-5 Kashiwanoha, Kashiwa-shi, Chiba 277-8561, Japan
}
\date{}

\begin{document}
\maketitle

\begin{abstract}
We study fixed-layout local verification with target-dependent local tests. Let $M$ be a random variable on $\bits^K$, and let $S$ record the test selected at each coordinate. For $s\in\supp(S)$, set $F_s=\{m\in\bits^K:\Prb[M=m,S=s]>0\}$, and let $D_{\mathrm{fib}}=\max_s\VCdim(F_s)$. Then
\[
  H(M\mid S)\le
  \log_2\!\left(\sum_{j=0}^{D_{\mathrm{fib}}}\binom Kj\right).
\]
If a fiber shatters $d$ coordinates, we can extract a weak relaxed locally decodable code with message length $d$ and block length $d+P$ over the original proof alphabet.

For a uniform $K$-bit target and fixed proof alphabet, $Q$, and $\sigma$, the nonadaptive lower bound of Goldberg--Gur--Saraogi implies that, for every fixed $0\le\gamma<1$, the condition $I(M;S)\le\gamma K$ forces
\[
 P=\Omega\!\left(
 \frac{K^{1+1/\lceil Q/\sigma\rceil}}
      {(\log K)^{2+2/\lceil Q/\sigma\rceil}}
 \right).
\]
If $P\le K(\log K)^c$ and $a=\lceil Q/\sigma\rceil$, the same argument gives
\[
 I(M;S)\ge K-
 O\!\left(K^{a/(a+1)}(\log K)^{3+ac/(a+1)}\right).
\]
In particular, $I(M;S)=K-o(K)$. Any discrete target-dependent verifier state $T$ that determines $S$ satisfies $I(M;T)\ge I(M;S)$.

Branches with bounded randomness and adaptive queries can be simulated nonadaptively by exposing the verifier's decision trees. If each branch uses at most $r$ random bits and $q$ adaptive proof queries, and the completeness--wrong-claim-soundness gap is fixed, exhaustive exposure gives a decoder with perfect completeness and at most $1+2^{r+1}\sum_{j<q}A^j$ queries. Under $I(M;S)\le\gamma K$ for a fixed $\gamma<1$, near-linear proof length requires this query complexity to be $\Omega(\log K/\log\log K)$. The binary one-query case gives $P=2^{\Omega(K)}$. Finally, we apply the explicit information bound to a global version of the list-sound dPCP interface of Gur--Minzer--Weissenberg--Zheng. A fixed target-independent menu of $L$ test profiles must satisfy $\log_2L\ge K-o(K)$. If the list depends on the proof, its description and any additional target-dependent selection data must be included in the state being measured.
\end{abstract}

\noindent\textbf{Keywords:} computational complexity; probabilistically checkable proofs; relaxed locally decodable codes; decodable PCPs; information-theoretic lower bounds.\\
\noindent\textbf{2020 Mathematics Subject Classification:} 68Q17, 94B65.

\section{Introduction}

\subsection{Information carried by test selection}

A single proof word can support local verification of every coordinate of a target. The required information may reside in the proof itself or in the verifier's target-dependent choice of local tests. A lower bound stated only in terms of proof length accounts for the former but not the latter.

Let $M$ be a random target in $\bits^K$ and $\Pi(M)\in\Sigma^P$ its proof; all proof words share the same alphabet, length, and coordinate set $[P]$. At coordinate $i$, an index $\tau_i$ specifies the verifier branches for claims $0$ and $1$. We write
\[
  S=(\tau_1,\ldots,\tau_K)
\]
for the \emph{selected-test profile}. The mutual information $I(M;S)$ measures how much this profile reveals about the target. The selection may be randomized, with arbitrary correlations across coordinates. If target-dependent verifier state $T$ determines $S$, data processing gives $I(M;T)\ge I(M;S)$.

Conditioning on $S=s$ fixes all local tests and leaves a common-test fiber of targets. If $H(M\mid S)$ is large, one such fiber has large VC dimension. A shattered coordinate set then supplies a submessage. Choosing one target representative for each assignment, prepending the assignment to its proof, and using each prepended bit to select the corresponding claim branch produces a weak relaxed locally decodable code. Existing RLDC lower bounds therefore constrain either the proof length or the query complexity needed for a nonadaptive simulation.

Throughout, the proof alphabet and coordinate set are fixed, soundness holds pointwise in a positive Hamming neighborhood, and the extracted decoder has bounded query complexity.

\subsection{Main results}

Theorem~\ref{thm:entropy-fiber} bounds $H(M\mid S)$ by the Sauer--Shelah growth function at the largest VC dimension of a common-test fiber, while Lemma~\ref{lem:extraction} extracts a weak RLDC from any shattered fiber over the original proof alphabet.

Combining the extraction with the nonadaptive lower bound of Goldberg--Gur--Saraogi gives Theorem~\ref{thm:main}. When the proof length is near-linear, Theorem~\ref{thm:quantitative-info} gives an explicit sublinear upper bound on $H(M\mid S)$ and hence a lower bound on the information in any verifier state that determines $S$. Section~\ref{sec:finite} treats bounded randomness and adaptive proof queries by exposing the corresponding decision trees. The binary one-query case uses the exponential lower bound for weak adaptive two-query RLDCs.

Section~\ref{sec:dpcp} applies the same information bound to a global variant of the list-sound dPCP interface used by Gur--Minzer--Weissenberg--Zheng. Their interface permits different coordinates to be explained by different list members. If one instead requires the selected-test profile to come from a fixed target-independent menu, then a near-linear proof forces that menu to have size $2^{K-o(K)}$.

\subsection{Relation to prior work}

Classical locally decodable codes were systematically studied by Katz and Trevisan~\cite{KatzTrevisan2000}; Kerenidis and de Wolf proved the exponential lower bound for binary two-query codes~\cite{KerenidisDeWolf2004}. Relaxed locally decodable codes were introduced by Ben-Sasson, Goldreich, Harsha, Sudan, and Vadhan in their study of robust PCPs of proximity~\cite{BGHSV2006}. Gur and Lachish established general lower bounds for relaxed local decoding~\cite{GurLachish2021}, and Dall'Agnol, Gur, and Lachish extended the structural method to adaptive local algorithms~\cite{DallAgnolGurLachish2023}. We use the recent nonadaptive inequality of Goldberg, Gur, and Saraogi~\cite{GoldbergGurSaraogi2026} and the binary adaptive two-query lower bound of Block et al.~\cite{BlockEtAl2026}. Cheng, Li, and Mao obtain complementary results in a small-wrong-output regime by converting RLDCs to LDCs~\cite{ChengLiMao2026}.

The new ingredient is the reduction from fixed-layout target-dependent verification to relaxed local decoding. On a common-test fiber, the inline prefix converts target-relative wrong-claim soundness into relaxed decoding. The resulting bounds separate proof length from the information carried by local-test selection.

Several related models also account explicitly for auxiliary information. Applebaum and Nir's advisor--verifier--prover games charge function-dependent advice together with verifier communication~\cite{ApplebaumNir2023}. Boyle, Komargodski, and Vafa prove a local-space/query tradeoff for dynamic memory checking using an information-compression argument inspired by local-decoding lower bounds~\cite{BoyleKomargodskiVafa2025}. Our setting is static: the relevant auxiliary information is the target-dependent selection of local tests, and the reduction proceeds through the VC dimension of a conditional support.

Decodable PCPs were introduced as a composition interface by Dinur and Harsha~\cite{DinurHarsha2013}; structured two-query dPCPs also appear in the derandomized parallel-repetition framework of Dinur and Meir~\cite{DinurMeir2011}. Gur, Minzer, Weissenberg, and Zheng use a list-sound dPCP in their separation of three-query RLDCs from three-query LDCs~\cite{GurMinzerWeissenbergZheng2026}. Section~\ref{sec:dpcp} considers a stronger interface in which coordinatewise explanation is replaced by selection from a fixed target-independent menu of test profiles, together with target-relative robustness in a fixed proof layout.

To the best of our knowledge, existing RLDC lower bounds do not account for the mutual information carried by selected local tests, and the cited dPCP constructions do not state the target-relative fixed-layout interface considered here.

\subsection{Organization}

Section~\ref{sec:model} gives the model and the RLDC lower bound used later. Section~\ref{sec:entropy} relates conditional entropy to fiber VC dimension, and Section~\ref{sec:extraction} extracts an RLDC from a common-test fiber. Section~\ref{sec:quantitative} derives the explicit information bound for near-linear proofs. Section~\ref{sec:finite} treats finite randomness, Section~\ref{sec:dpcp} considers global selection in a list-sound dPCP interface, and Section~\ref{sec:scope} presents boundary examples and open problems.

\section{Model and coding input}\label{sec:model}

All logarithms and entropies are base two. Hamming distance is unnormalized unless a relative distance is stated explicitly.

\subsection{Fixed-layout target-dependent local verification}

Let $\Sigma$ be an alphabet of size $A\ge2$. Fix distinct symbols $a_0,a_1\in\Sigma$, and write $\iota(b)=a_b$ for $b\in\bits$. A \emph{fixed-layout proof map} is
\begin{equation}
  \Pi:\bits^K\longrightarrow\Sigma^P.
  \label{eq:proofmap}
\end{equation}
For each target $m$, the proof word $\Pi(m)$ may be different, but every proof word uses the same alphabet, length, and coordinate set $[P]$.

For each coordinate $i\in[K]$, claim $b\in\bits$, and local-test index $\tau$, let $V_{i,b}^{\tau}$ be a randomized verifier with oracle access to a word in $\Sigma^P$. A \emph{selected-test profile} is a discrete random variable
\begin{equation}
  S=(\tau_1,\ldots,\tau_K),
  \label{eq:test-profile}
\end{equation}
which may be randomized conditional on the target $M$. The index $\tau_i$ may encode both branch descriptions at coordinate $i$.

For every pair $(m,s)\in\supp(M,S)$, every $i\in[K]$, and fixed constants $0<\rho<1$ and $0\le\beta<1$, assume perfect completeness
\begin{equation}
  \Prb\!\left[V_{i,m_i}^{s_i}(\Pi(m))=1\right]=1,
  \label{eq:completeness}
\end{equation}
and target-relative wrong-claim soundness
\begin{equation}
  \distH(z,\Pi(m))\le \rho P
  \quad\Longrightarrow\quad
  \Prb\!\left[V_{i,1-m_i}^{s_i}(z)=1\right]\le\beta.
  \label{eq:soundness}
\end{equation}
Condition~\eqref{eq:soundness} is target-relative: it is imposed around the proof $\Pi(m)$ for the current target. Imposing the same wrong-claim condition around every target proof would conflict with perfect completeness whenever two targets differ at coordinate $i$. We require no robustness from the correct branch on a corrupted proof; rejection is interpreted as the relaxed-decoding symbol $\bot$.

Fixed layout means that every local test addresses the same coordinate set $[P]$; the proof words themselves may depend on the target. A target-dependent coordinate set or address translation is outside the model unless its description is included in the target-dependent verifier state.

\subsection{Common-test fibers and maximum fiber VC dimension}

For $s\in\supp(S)$, define the \emph{common-test fiber}
\begin{equation}
  F_s=\{m\in\bits^K:\Prb[M=m,S=s]>0\}.
  \label{eq:fiber}
\end{equation}
Its VC dimension is taken with respect to the $K$ target coordinates. Define the maximum fiber VC dimension
\begin{equation}
  \fibdim(M;S)=\max_{s\in\supp(S)}\VCdim(F_s).
  \label{eq:fiber-dim}
\end{equation}
For $-1\le d\le K$, write
\begin{equation}
  \Bbin_K(d)=\sum_{j=0}^{d}\binom Kj,
  \qquad \Bbin_K(-1)=0.
  \label{eq:binomial-growth}
\end{equation}
All targets in $F_s$ share the same selected-test profile. The quantity $\fibdim(M;S)$ is the largest number of target coordinates that can vary freely within a single fiber.

\subsection{Weak relaxed local decoding}

We use the weak relaxed-decoding convention. A code $C:\bits^k\to\Gamma^n$ has a nonadaptive $(Q,\delta,\sigma)$ relaxed decoder if, on input $j\in[k]$, the decoder makes at most $Q$ oracle queries chosen independently of the received word, satisfies
\[
  D^{C(x)}(j)=x_j
  \quad\text{with probability one},
\]
and, for every $w$ with $\distH(w,C(x))\le\delta n$,
\begin{equation}
  \Prb\!\left[D^w(j)\in\{x_j,\bot\}\right]\ge\sigma.
  \label{eq:weak-rldc}
\end{equation}
For $d\ge2$, define
\begin{equation}
  \Lrldc_{Q,A,\sigma}(d)=
  \left(
  \frac{\sigma^2d}
       {38Q^4\log_2^2(A)\log_2^2 d}
  \right)^{1+1/\lceil Q/\sigma\rceil}.
  \label{eq:Lfunction}
\end{equation}

We use the following form of the Goldberg--Gur--Saraogi lower bound~\cite[Theorem~1]{GoldbergGurSaraogi2026}. If a code with $k$ message bits, block length $n$, alphabet $\Gamma$, and a nonadaptive $(Q,\delta,\sigma)$ weak relaxed decoder satisfies
\begin{equation}
  \delta>n^{-\sigma/(2Q)},
  \label{eq:GGSradius}
\end{equation}
then
\begin{equation}
  \frac{k}{\log_2^2 k}
  \le
  38Q^4\sigma^{-2}\log_2^2|\Gamma|\,
  n^{1-1/(\lceil Q/\sigma\rceil+1)}.
  \label{eq:GGS}
\end{equation}
Equivalently, $n\ge\Lrldc_{Q,|\Gamma|,\sigma}(k)$.

\section{Conditional entropy and common-test fibers}\label{sec:entropy}

The entropy bound does not require $M$ to be uniform.

\begin{theorem}[Conditional entropy versus fiber VC dimension]\label{thm:entropy-fiber}
For every random variable $M$ supported on $\bits^K$ and every selected-test profile $S$,
\begin{equation}
  H(M\mid S)
  \le
  \log_2 \Bbin_K\!\bigl(\fibdim(M;S)\bigr).
  \label{eq:entropy-fiber}
\end{equation}
Equivalently, if an integer $d\in[K]$ satisfies
\begin{equation}
  H(M\mid S)>\log_2\Bbin_K(d-1),
  \label{eq:entropy-threshold}
\end{equation}
then some common-test fiber shatters at least $d$ target coordinates.
\end{theorem}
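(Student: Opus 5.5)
The plan is to condition on the value of $S$ and bound each conditional entropy by the logarithm of the support size. Sauer--Shelah then controls that support size.

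First, write
\[
  H(M\mid S)=\sum_{s\in\supp(S)}\Prb[S=s]\,H(M\mid S=s).
\]
For each $s\in\supp(S)$, the conditional law of $M$ given $S=s$ is supported exactly on the fiber $F_s$ of \eqref{eq:fiber}. Hence $H(M\mid S=s)\le\log_2|F_s|$, because entropy is at most the logarithm of the support size. If $S$ takes countably many values, the sum is an average of nonnegative terms, and this per-fiber bound is all that is needed.

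Second, apply the Sauer--Shelah lemma to the family $F_s\subseteq\bits^K$, viewed as a set system on the $K$ target coordinates. It gives
\[
  |F_s|\le\Bbin_K\bigl(\VCdim(F_s)\bigr)\le\Bbin_K\bigl(\fibdim(M;S)\bigr),
\]
where the second inequality uses that $\Bbin_K$ is nondecreasing in $d$ and that $\fibdim$ is the maximum over $s$ in \eqref{eq:fiber-dim}. Every fiber is nonempty, so $\VCdim(F_s)\ge0$ and $\Bbin_K\ge1$, and the logarithm is finite. The maximum in \eqref{eq:fiber-dim} exists because the VC dimensions are integers in $[0,K]$. Averaging the per-fiber bounds over $s$ yields \eqref{eq:entropy-fiber}.

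Third, the equivalent form follows by contraposition. Suppose no fiber shatters $d$ coordinates. Then every fiber has VC dimension at most $d-1$, so $\fibdim(M;S)\le d-1$. Monotonicity of $\Bbin_K$ gives $H(M\mid S)\le\log_2\Bbin_K(d-1)$, contradicting \eqref{eq:entropy-threshold}. The case $d-1=0$ needs no special treatment: it corresponds to $\Bbin_K(0)=1$, so the hypothesis reads $H(M\mid S)>0$.

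There is no real obstacle here. The only point needing care is the measure-theoretic one, namely that conditioning on $S=s$ is legitimate for a discrete $S$ of possibly countable support and that $F_s$ is precisely the conditional support. Both follow directly from the definition of $F_s$.
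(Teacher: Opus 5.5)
Your proposal is correct and follows essentially the same route as the paper: you bound each $H(M\mid S=s)$ by $\log_2|F_s|$, apply Sauer--Shelah with $\VCdim(F_s)\le\fibdim(M;S)$, average over $S$, and obtain the threshold form by contraposition. Your remarks on nonempty fibers, monotonicity of $\Bbin_K$, and the case $d=1$ only make explicit details that the paper's argument leaves implicit.
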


\begin{proof}
Let $D=\fibdim(M;S)$. For every $s\in\supp(S)$,
\[
  H(M\mid S=s)\le\log_2|F_s|.
\]
Since $\VCdim(F_s)\le D$, the Sauer--Shelah bound~\cite{Sauer1972} gives
\[
  |F_s|\le\sum_{j=0}^{D}\binom Kj=\Bbin_K(D).
\]
Averaging over $S$ proves~\eqref{eq:entropy-fiber}. If every fiber had VC dimension at most $d-1$, the same argument would give $H(M\mid S)\le\log_2\Bbin_K(d-1)$, contradicting~\eqref{eq:entropy-threshold}.
\end{proof}

For a uniform target, Theorem~\ref{thm:entropy-fiber} yields a linear-dimensional fiber whenever a constant fraction of the target entropy remains after conditioning on $S$. Let $h_2(t)=-t\log_2t-(1-t)\log_2(1-t)$ be the binary entropy function.

\begin{corollary}[A linear-dimensional common-test fiber]\label{cor:linear-fiber}
Let $M$ be uniform on $\bits^K$ and suppose
\begin{equation}
  I(M;S)\le\gamma K
  \label{eq:info-gamma}
\end{equation}
for a fixed $0\le\gamma<1$. Let $\vartheta_\gamma\in(0,1/2]$ satisfy
\[
  h_2(\vartheta_\gamma)=1-\gamma.
\]
For every fixed $0<\eps<\vartheta_\gamma$ and all sufficiently large $K$, some common-test fiber shatters
\begin{equation}
  d=\left\lfloor(\vartheta_\gamma-\eps)K\right\rfloor
  \label{eq:dlinear}
\end{equation}
target coordinates.
\end{corollary}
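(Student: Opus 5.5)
We will derive the corollary directly from the threshold form~\eqref{eq:entropy-threshold} of Theorem~\ref{thm:entropy-fiber}, combined with the standard entropy estimate for binomial tails.

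First, since $M$ is uniform, $H(M)=K$. Hypothesis~\eqref{eq:info-gamma} then gives
\[
  H(M\mid S)=K-I(M;S)\ge(1-\gamma)K=h_2(\vartheta_\gamma)K .
\]
Next take $d$ as in~\eqref{eq:dlinear}. For $K$ large we have $d\ge1$, and $(d-1)/K\le\vartheta_\gamma-\eps<1/2$. We use the standard tail bound: for $t\le1/2$,
\[
  \Bbin_K(\lfloor tK\rfloor)\le2^{h_2(t)K}.
\]
To prove it, note that $1\ge\sum_{j\le tK}\binom Kj t^j(1-t)^{K-j}$. Since $t/(1-t)\le1$, each term satisfies $t^j(1-t)^{K-j}\ge t^{tK}(1-t)^{(1-t)K}=2^{-h_2(t)K}$. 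Applying the bound with $t=\vartheta_\gamma-\eps$ gives
\[
  \log_2\Bbin_K(d-1)\le h_2(\vartheta_\gamma-\eps)K .
\]

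It remains to compare the two quantities. The function $h_2$ is strictly increasing on $(0,1/2]$, so $h_2(\vartheta_\gamma-\eps)<h_2(\vartheta_\gamma)$ whenever $0<\eps<\vartheta_\gamma$. We must also treat the case $\vartheta_\gamma-\eps\le0$: it cannot occur, because $\eps<\vartheta_\gamma$. The edge case $\gamma=0$ gives $\vartheta_\gamma=1/2$, which causes no problem.

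Combining the steps,
\[
  H(M\mid S)\ge h_2(\vartheta_\gamma)K>h_2(\vartheta_\gamma-\eps)K\ge\log_2\Bbin_K(d-1).
\]
This is exactly condition~\eqref{eq:entropy-threshold}, so Theorem~\ref{thm:entropy-fiber} yields a fiber that shatters at least $d$ coordinates. Restricting the shattered set to a subset of size exactly $d$ gives the statement.

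The only nontrivial ingredient is the binomial tail estimate, and its one-line proof is given above. The strict inequality is then automatic: it holds for every $K\ge1$, and "sufficiently large $K$" is needed only to ensure that $d\ge1$, so that $d\in[K]$. Since $d$ is a floor, $d\le(\vartheta_\gamma-\eps)K$ holds whenever $\vartheta_\gamma-\eps\ge0$, which the hypothesis $\eps<\vartheta_\gamma$ guarantees.
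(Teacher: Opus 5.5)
Your proof is correct and follows the same route as the paper: you lower-bound $H(M\mid S)$ by $(1-\gamma)K$ using uniformity, upper-bound $\log_2\Bbin_K(d-1)$ by a binary-entropy estimate, and invoke the threshold form of Theorem~\ref{thm:entropy-fiber}. The only difference is that you prove the exact tail bound $\Bbin_K(\lfloor tK\rfloor)\le 2^{h_2(t)K}$ for $t\le 1/2$, where the paper uses the asymptotic estimate $(h_2(\vartheta_\gamma-\eps)+o(1))K$; with the exact bound the strict comparison holds for every $K$ with $d\ge1$.
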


\begin{proof}
Uniformity and~\eqref{eq:info-gamma} give
\[
 H(M\mid S)=K-I(M;S)\ge(1-\gamma)K.
\]
For $d$ as in~\eqref{eq:dlinear},
\[
 \log_2\Bbin_K(d-1)
 \le \bigl(h_2(\vartheta_\gamma-\eps)+o(1)\bigr)K
 <(1-\gamma)K.
\]
Apply Theorem~\ref{thm:entropy-fiber}.
\end{proof}

If $M$ has full support and $S$ is independent of $M$, every nonempty fiber is the full cube and $\fibdim(M;S)=K$. At the opposite extreme, $S=M$ gives $\fibdim(M;S)=0$.

\section{Extracting an RLDC from a common-test fiber}\label{sec:extraction}

Fix $s\in\supp(S)$ and a set $I=\{i_1,\ldots,i_d\}$ shattered by $F_s$. For every $x\in\bits^d$, choose a representative $m_x\in F_s$ satisfying $m_x|_I=x$, and define
\begin{equation}
  C_s(x)=\Pi(m_x)\in\Sigma^P.
  \label{eq:suffix-code}
\end{equation}
Set
\begin{equation}
  d_0(A)=\max\!\left\{2,\left\lceil2\log_2A\right\rceil\right\}.
  \label{eq:d0}
\end{equation}

\begin{lemma}[RLDC extraction from a common-test fiber]\label{lem:extraction}
Assume $d\ge d_0(A)$. Under~\eqref{eq:completeness}--\eqref{eq:soundness}, the map $C_s$ is injective and its distinct codewords have Hamming distance greater than $\rho P$. Moreover, the inline code
\begin{equation}
  \widehat C_s(x)=
  \iota(x_1)\cdots\iota(x_d)\Vert C_s(x)
  \in\Sigma^{d+P}
  \label{eq:inline-code}
\end{equation}
has a weak relaxed decoder with perfect completeness.

The decoder uses one inline query, followed by the queries of the claim branch selected by the inline symbol. If both claim branches are nonadaptive and make at most $q$ proof queries, the extracted decoder is nonadaptive with
\begin{equation}
  Q=1+2q
  \label{eq:extracted-Q}
\end{equation}
queries and success parameter
\begin{equation}
  \sigma=1-\beta.
  \label{eq:extracted-sigma}
\end{equation}
It has decoding radius
\begin{equation}
  \delta_0=
  \frac{\rho}{4\bigl(1+(1-\rho)\log_2 A\bigr)}.
  \label{eq:delta0}
\end{equation}
\end{lemma}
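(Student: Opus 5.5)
The plan has three parts: separate distinct codewords of $C_s$ using the shared profile $s$, bound the number of corrupted suffix symbols with a Singleton-type estimate, and build the decoder directly from the branches $V^{s_i}_{i,b}$.

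\emph{Distance and injectivity.} Let $x\neq x'$ in $\bits^d$, and pick $j$ with $x_j\neq x'_j$. Put $i=i_j$. Both $m_x$ and $m_{x'}$ lie in $F_s$, so $(m_x,s),(m_{x'},s)\in\supp(M,S)$, and both targets use the same test index $s_i$. Suppose $\distH(C_s(x'),C_s(x))\le\rho P$. Soundness~\eqref{eq:soundness} around $\Pi(m_x)$, applied with $z=\Pi(m_{x'})$, gives
\[
\Prb\bigl[V^{s_i}_{i,1-x_j}(\Pi(m_{x'}))=1\bigr]\le\beta<1.
\]
But $1-x_j=x'_j=(m_{x'})_i$, so completeness~\eqref{eq:completeness} at $m_{x'}$ says this probability equals $1$. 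This contradiction shows that distinct codewords are at distance greater than $\rho P$, and in particular that $C_s$ is injective.

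\emph{Decoder and radius.} On input $j$, the decoder queries inline position $j$ and reads a symbol $a$. If $a\notin\{a_0,a_1\}$, it outputs $\bot$. If $a=a_b$, it runs $V^{s_{i_j}}_{i_j,b}$ on the suffix and outputs $b$ if the branch accepts and $\bot$ otherwise. When both branches are nonadaptive with at most $q$ queries, the decoder can query the inline symbol together with the query sets of both branches up front. This makes it nonadaptive with $Q=1+2q$, and perfect completeness follows from~\eqref{eq:completeness}.

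For soundness, let $w$ satisfy $\distH(w,\widehat C_s(x))\le\delta_0(d+P)$. The key step, and the one I expect to be the main obstacle, is showing that the suffix of $w$ stays within $\rho P$ of $C_s(x)$, which requires controlling $d$ relative to $P$. For this I use the first part: $2^d$ words in $\Sigma^P$ with pairwise distance greater than $\rho P$. Singleton's bound then gives
\[
2^d\le A^{P-\lfloor\rho P\rfloor}\le A^{(1-\rho)P+1},
\]
so $d\le\bigl((1-\rho)P+1\bigr)\log_2A$. Since $d\ge d_0(A)\ge 2\log_2A$, the additive term satisfies $\log_2A\le d/2$, and absorbing it gives $d\le 2(1-\rho)P\log_2A$. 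Hence
\[
d+P\le 2P\bigl(1+(1-\rho)\log_2A\bigr),
\]
and therefore $\delta_0(d+P)\le\rho P/2\le\rho P$. So the suffix is within $\rho P$ of $\Pi(m_x)$.

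\emph{Case analysis on the inline symbol.} If $a=a_{x_j}$, every output lies in $\{x_j,\bot\}$. If $a\notin\{a_0,a_1\}$, the output is $\bot$. If $a=a_{1-x_j}$, target-relative soundness~\eqref{eq:soundness} at $m_x$ with $i=i_j$ (using $(m_x)_{i_j}=x_j$) makes the wrong branch accept with probability at most $\beta$. The output is then $\bot$ with probability at least $1-\beta$, which gives $\sigma=1-\beta$.
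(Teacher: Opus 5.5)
Your proposal is correct and follows essentially the same route as the paper's proof. It uses the same three steps: completeness at $m_{x'}$ against target-relative soundness at $m_x$ under the shared index $s_{i_j}$; the inline-symbol decoder, made nonadaptive by querying both branches' query sets together; and the Singleton estimate with $d\ge 2\log_2A$, giving $\delta_0(d+P)\le\rho P/2$. The differences are cosmetic: you phrase Singleton via $\lfloor\rho P\rfloor$ instead of $D_{\min}$, and you omit the paper's remark on uniqueness of the nearby codeword, which the weak decoding guarantee does not require.
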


\begin{proof}
Suppose $x_j\ne x'_j$ and $\distH(C_s(x),C_s(x'))\le\rho P$. The same local-test index $s_{i_j}$ is used throughout the fiber. By perfect completeness for $m_{x'}$, the branch claiming $x'_j$ accepts $C_s(x')$ with probability one. Yet $C_s(x')$ lies within distance $\rho P$ of $C_s(x)=\Pi(m_x)$, so target-relative soundness for $m_x$ bounds the same acceptance probability by $\beta<1$. This contradiction shows that distinct proof suffixes have distance greater than $\rho P$; in particular, $C_s$ is injective.

To decode $x_j$, read inline coordinate $j$. If its value is $a_b$ for $b\in\bits$, run $V_{i_j,b}^{s_{i_j}}$ on the proof suffix and output $b$ on acceptance and $\bot$ on rejection. If the inline value is in $\Sigma\setminus\{a_0,a_1\}$, output $\bot$ without invoking a branch. On an uncorrupted codeword, the inline claim is correct and~\eqref{eq:completeness} gives perfect completeness.

Now suppose the proof suffix of a received word $w$ lies within distance $\rho P$ of $C_s(x)$. If inline coordinate $j$ is unchanged, the decoder outputs $x_j$ or $\bot$. If it is changed to $a_{1-x_j}$, the wrong-claim branch is selected and the wrong bit is output with probability at most $\beta$ by~\eqref{eq:soundness}. Any other inline symbol produces $\bot$. Under this suffix-distance condition, the decoder therefore satisfies~\eqref{eq:weak-rldc} with $\sigma=1-\beta$.

If both claim branches are nonadaptive, sample one random tape for each claim and query the inline coordinate together with the union of the two proof-query sets. After reading the inline symbol, retain only the answers for the selected branch. All query locations are fixed in advance, and there are at most $1+2q$ of them.

It remains to control the relative radius. Let $D_{\min}$ be the minimum distance among the $2^d$ proof suffixes. The $A$-ary Singleton bound and $D_{\min}>\rho P$ give
\begin{equation}
  d\le(P-D_{\min}+1)\log_2 A
  <\bigl((1-\rho)P+1\bigr)\log_2 A.
  \label{eq:singleton}
\end{equation}
If $d\ge d_0(A)$, then $d\ge2\log_2A$ and hence
\[
  P>\frac{d-\log_2A}{(1-\rho)\log_2A}
  \ge \frac{d}{2(1-\rho)\log_2A}.
\]
Consequently, the value in~\eqref{eq:delta0} satisfies $\delta_0(d+P)<\rho P/2$. Every word within relative distance $\delta_0$ therefore satisfies the suffix-distance condition used above. Since $D_{\min}>\rho P$, the same inequality places the received word within less than half the minimum code distance, so the nearby codeword is unique.
\end{proof}

\begin{remark}[Existential nature of the extraction]
The extraction is existential: neither the shattered set nor the representatives $m_x$ need be found efficiently. The lemma controls only block length and query complexity.
\end{remark}

Because $a_0,a_1\in\Sigma$, the received word remains in $\Sigma^{d+P}$ and its proof suffix remains in the verifier's oracle domain $\Sigma^P$. No alphabet extension is needed.

\begin{theorem}[General fiber-to-RLDC lower bound]\label{thm:fiber-rldc}
Assume~\eqref{eq:completeness}--\eqref{eq:soundness}, and suppose both claim branches indexed by the selected-test profile are nonadaptive with at most $q$ proof queries. Set $Q=1+2q$ and $\sigma=1-\beta$. Let $d\ge d_0(A)$ and assume
\[
  d\le\fibdim(M;S).
\]
A sufficient information-theoretic condition is
\[
  H(M\mid S)>\log_2\Bbin_K(d-1).
\]
If
\begin{equation}
  \delta_0>(d+P)^{-\sigma/(2Q)},
  \label{eq:fiber-radius}
\end{equation}
then
\begin{equation}
  d+P\ge\Lrldc_{Q,A,\sigma}(d).
  \label{eq:fiber-bound}
\end{equation}
\end{theorem}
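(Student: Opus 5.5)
The plan is to chain Theorem~\ref{thm:entropy-fiber}, Lemma~\ref{lem:extraction}, and the Goldberg--Gur--Saraogi inequality~\eqref{eq:GGS}. The argument has three steps: produce a fiber with a large shattered set, build the inline code on that set, and apply the coding lower bound.

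\emph{Finding the fiber.} Since $d\le\fibdim(M;S)$, some $s\in\supp(S)$ has $\VCdim(F_s)\ge d$. Hence $F_s$ shatters a set of size at least $d$, and shattering passes to subsets, so $F_s$ shatters some $I=\{i_1,\dots,i_d\}$ of size exactly $d$. The information-theoretic hypothesis $H(M\mid S)>\log_2\Bbin_K(d-1)$ gives the same conclusion by the second half of Theorem~\ref{thm:entropy-fiber}. For each $x\in\bits^d$ I fix a representative $m_x\in F_s$ with $m_x|_I=x$ and define $C_s$ and $\widehat C_s$ as in~\eqref{eq:suffix-code}--\eqref{eq:inline-code}.

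\emph{Building the decoder.} Because $d\ge d_0(A)$, Lemma~\ref{lem:extraction} applies. It gives a code $\widehat C_s:\bits^d\to\Sigma^{d+P}$ with a nonadaptive weak relaxed decoder that has perfect completeness. The decoder makes $Q=1+2q$ queries, has success parameter $\sigma=1-\beta$, and has decoding radius $\delta_0$. Nonadaptivity holds because both claim branches are nonadaptive. The decoder queries the inline coordinate together with the union of the two branches' query sets, and all of these locations are fixed in advance. Here $\Gamma=\Sigma$, so $|\Gamma|=A$, and the block length is $n=d+P$. Since $\beta<1$, we have $\sigma>0$, so $\lceil Q/\sigma\rceil$ is well defined.

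\emph{Applying the coding bound.} Under hypothesis~\eqref{eq:fiber-radius}, the radius condition~\eqref{eq:GGSradius} holds with $\delta=\delta_0$ and $n=d+P$. Goldberg--Gur--Saraogi then gives~\eqref{eq:GGS} with $k=d\ge2$, so $\log_2 d>0$. Rearranging,
\[
n^{1-1/(\lceil Q/\sigma\rceil+1)}\ge \frac{\sigma^2 d}{38Q^4\log_2^2A\,\log_2^2 d}.
\]
Raising both sides to the power $(\lceil Q/\sigma\rceil+1)/\lceil Q/\sigma\rceil=1+1/\lceil Q/\sigma\rceil$ yields $n\ge\Lrldc_{Q,A,\sigma}(d)$, which is~\eqref{eq:fiber-bound}. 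This is exactly the ``equivalently'' form recorded after~\eqref{eq:GGS}.

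The main obstacle is not computational. It is checking that the extracted object satisfies the hypotheses of the cited theorem exactly as stated. First, the decoder must be nonadaptive in the strict sense that query locations do not depend on the received word, even though the branch used depends on the inline symbol. Querying the union of both branches handles this, at the cost of the factor $2$ in $Q$. Second, the radius $\delta_0$ must be measured relative to the full block length $d+P$; Lemma~\ref{lem:extraction} already guarantees this. Third, the weak-decoding guarantee must hold for every word within distance $\delta_0(d+P)$, including words whose inline prefix is corrupted; this case analysis is also done in the lemma. With these checks in place, the theorem is a direct composition.
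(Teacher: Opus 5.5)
Your proposal is correct and follows the paper's proof essentially step for step: you obtain a $d$-element shattered set in some fiber (directly from $d\le\fibdim(M;S)$ or via Theorem~\ref{thm:entropy-fiber}), apply Lemma~\ref{lem:extraction} to get a nonadaptive weak relaxed decoder for the inline code over $\Sigma$ with block length $d+P$, $Q=1+2q$, $\sigma=1-\beta$ and radius $\delta_0$, and then invoke the Goldberg--Gur--Saraogi bound, whose radius hypothesis coincides with~\eqref{eq:fiber-radius}. Your explicit rearrangement of~\eqref{eq:GGS} into $n\ge\Lrldc_{Q,A,\sigma}(d)$ and your remark that shattering passes to subsets fill in details the paper leaves implicit.
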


\begin{proof}
The condition $d\le\fibdim(M;S)$ directly supplies a $d$-coordinate shattered subset of some fiber; the displayed entropy condition supplies one by Theorem~\ref{thm:entropy-fiber}. Lemma~\ref{lem:extraction} then gives a code of message length $d$ and block length $d+P$ over $\Sigma$, with nonadaptive query complexity $Q$, success parameter $\sigma$, and radius $\delta_0$. Conditions~\eqref{eq:fiber-radius} and~\eqref{eq:GGSradius} coincide, so~\eqref{eq:GGS} gives~\eqref{eq:fiber-bound}.
\end{proof}

\begin{theorem}[Information--proof-length tradeoff]\label{thm:main}
Let $M$ be uniform on $\bits^K$ and suppose~\eqref{eq:completeness}--\eqref{eq:soundness} hold. Assume both claim branches indexed by the selected-test profile are nonadaptive with at most $q$ proof queries, and set $Q=1+2q$ and $\sigma=1-\beta$. If $I(M;S)\le\gamma K$ for a fixed $0\le\gamma<1$, then for every fixed $0<\eps<\vartheta_\gamma$ and $d$ given by~\eqref{eq:dlinear},
\begin{equation}
  d+P\ge\Lrldc_{Q,A,\sigma}(d)
  \label{eq:main-finite}
\end{equation}
whenever $d\ge d_0(A)$ and~\eqref{eq:fiber-radius} holds. Consequently, for fixed $A,Q,\gamma,\eps$, and $\sigma$,
\begin{equation}
  P=\Omega\!\left(
  \frac{K^{1+1/\lceil Q/\sigma\rceil}}
       {(\log K)^{2+2/\lceil Q/\sigma\rceil}}
  \right).
  \label{eq:main-asymptotic}
\end{equation}
\end{theorem}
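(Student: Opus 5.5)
The plan is to combine Corollary~\ref{cor:linear-fiber} with Theorem~\ref{thm:fiber-rldc}, and then unwind the function $\Lrldc$ asymptotically.

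\textbf{Finite inequality.} Since $M$ is uniform and $I(M;S)\le\gamma K$, Corollary~\ref{cor:linear-fiber} shows that, for $K$ large enough, some common-test fiber shatters $d=\lfloor(\vartheta_\gamma-\eps)K\rfloor$ coordinates. Hence $d\le\fibdim(M;S)$. Under the stated hypotheses $d\ge d_0(A)$ and~\eqref{eq:fiber-radius}, Theorem~\ref{thm:fiber-rldc} gives~\eqref{eq:main-finite} directly. No further work is needed here beyond matching hypotheses.

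\textbf{Asymptotic bound.} Fix $A,Q,\sigma,\gamma,\eps$. Then $d=\Theta(K)$, and $d\ge d_0(A)$ holds once $K$ is large, because $d_0(A)$ is a constant. The main obstacle is the radius condition~\eqref{eq:fiber-radius}. Its left side $\delta_0$ is a positive constant depending only on $\rho$ and $A$. Its right side $(d+P)^{-\sigma/(2Q)}$ tends to $0$, since $d+P\ge d\to\infty$. So the condition holds for all sufficiently large $K$, whatever $P$ is. I would state this explicitly, noting that $\rho$ is a fixed constant of the model, so that no circularity in $P$ arises.

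Then~\eqref{eq:main-finite} holds. Write $a=\lceil Q/\sigma\rceil$. Expanding,
\[
\Lrldc_{Q,A,\sigma}(d)=\Bigl(\frac{\sigma^2 d}{38Q^4\log_2^2A\,\log_2^2 d}\Bigr)^{1+1/a}
=\Omega\!\Bigl(\frac{K^{1+1/a}}{(\log K)^{2+2/a}}\Bigr),
\]
using $d=\Theta(K)$ and $\log d=\Theta(\log K)$. Finally, subtract $d=O(K)$. Since $1+1/a>1$, the right side is $\omega(K)$, so $P\ge\Lrldc(d)-d$ is still of the displayed order, which is~\eqref{eq:main-asymptotic}.
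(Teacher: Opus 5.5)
Your proposal is correct and follows the paper's proof essentially step for step. Corollary~\ref{cor:linear-fiber} supplies the $\Theta(K)$-dimensional shattered fiber, Theorem~\ref{thm:fiber-rldc} gives the finite inequality, the constant radius $\delta_0$ satisfies \eqref{eq:fiber-radius} once $d+P\to\infty$, and the superlinear growth of $\Lrldc_{Q,A,\sigma}(d)$ absorbs the subtracted $d$; your remark that $\delta_0$ depends only on the fixed $\rho$ and $A$, so the radius condition holds whatever $P$ is, makes explicit a point the paper leaves implicit.
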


\begin{proof}
Corollary~\ref{cor:linear-fiber} supplies $d=\Theta(K)$ shattered coordinates, so $d\ge d_0(A)$ for all sufficiently large $K$. Apply Theorem~\ref{thm:fiber-rldc}. For fixed $Q$ and $\sigma$, the constant $\delta_0$ satisfies~\eqref{eq:fiber-radius} for all sufficiently large $K$. Writing $a=\lceil Q/\sigma\rceil$, the right-hand side of~\eqref{eq:main-finite} equals
\[
  \Omega\!\left(
  \frac{d^{1+1/a}}{(\log d)^{2+2/a}}
  \right),
\]
which is superlinear in $d$. Subtracting $d$ changes only the implicit constant, and $d=\Theta(K)$ gives~\eqref{eq:main-asymptotic}.
\end{proof}

\begin{corollary}[Binary one-query endpoint]\label{cor:binary}
Suppose $A=2$, $\beta<1/2$, every branch indexed by the selected-test profile makes one possibly adaptive proof query, and $I(M;S)\le\gamma K$ for a uniform target and a fixed $0\le\gamma<1$. Then
\begin{equation}
  P=2^{\Omega_{\rho,\beta,\gamma}(K)}.
  \label{eq:binary-exp}
\end{equation}
\end{corollary}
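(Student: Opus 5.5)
We prove Corollary~\ref{cor:binary} by following the route of Theorem~\ref{thm:main}, with one change: the Goldberg--Gur--Saraogi inequality is replaced by the exponential lower bound of Block et al.~\cite{BlockEtAl2026} for weak binary adaptive two-query RLDCs.

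First, the uniform target and $I(M;S)\le\gamma K$ feed into Corollary~\ref{cor:linear-fiber}. Fixing $\eps=\vartheta_\gamma/2$, we obtain a common-test fiber $F_s$ that shatters $d=\lfloor(\vartheta_\gamma/2)K\rfloor=\Theta_\gamma(K)$ coordinates once $K$ is large. Since $A=2$, we have $d_0(2)=2$, so $d\ge d_0(A)$ eventually.

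Second, we apply Lemma~\ref{lem:extraction} to $F_s$. The lemma's structural part does not use nonadaptivity: the inline code $\widehat C_s$ has block length $d+P$ over the binary alphabet and a weak relaxed decoder with perfect completeness. That decoder reads one inline bit and then runs the selected claim branch. Here each branch makes a single (possibly adaptive) proof query, so the decoder is adaptive with exactly two queries in total; the second query's location depends on the first answer. Its success parameter is $\sigma=1-\beta>1/2$, because $\beta<1/2$. Its decoding radius is $\delta_0=\rho/(4(1+(1-\rho)))$, which is a positive constant depending only on $\rho$. So we have a binary adaptive two-query weak RLDC with message length $d$, block length $d+P$, constant radius, and success probability bounded away above $1/2$.

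Third, we invoke the Block et al.\ theorem. It states that any such code with constant radius and $\sigma>1/2$ has block length $2^{\Omega_{\delta,\sigma}(d)}$. This gives $d+P\ge 2^{c d}$ with $c=c(\rho,\beta)>0$. Since $d=\Theta_\gamma(K)$ and the term $d$ is negligible next to $2^{cd}$, we get $P=2^{\Omega_{\rho,\beta,\gamma}(K)}$.

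The main obstacle is matching the hypotheses of the Block et al.\ theorem. I expect two issues, in this order.
\begin{itemize}
\item \emph{The $\sigma$ threshold.} Their exponential bound may require $\sigma$ above some fixed threshold, or may be stated for the weak convention with $\bot$. I will check that $\sigma=1-\beta>1/2$ suffices; this is presumably exactly why the statement assumes $\beta<1/2$.
\item \emph{Query-count normalization.} The inline query must be counted as one of the two adaptive queries. Note also that the decoder never needs both branches: after reading the inline bit it runs only the selected branch, so the adaptive decoder makes exactly two queries rather than the $1+2q=3$ of the nonadaptive simulation.
\end{itemize}
If their theorem is stated with a relative radius bound such as $\delta\le$ some constant, we shrink $\delta_0$ accordingly. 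The radius argument of Lemma~\ref{lem:extraction}, which places the suffix within distance $\rho P/2$, remains valid for any smaller radius.
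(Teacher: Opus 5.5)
Your proposal is correct and follows the paper's proof step for step. You take a linear-size shattered set from Corollary~\ref{cor:linear-fiber}, build the binary inline code of Lemma~\ref{lem:extraction} with an adaptive two-query weak decoder of success $1-\beta>1/2$ and constant radius $\delta_0=\rho/(4(2-\rho))$, apply the exponential bound of Block et al., and absorb the $d=O(K)$ inline prefix; the only point the paper states explicitly that you leave implicit is that $\delta_0$ lies below half the minimum relative distance, so the nearby codeword is unique, which the distance argument in Lemma~\ref{lem:extraction} already supplies.
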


\begin{proof}
On the shattered fiber supplied by Corollary~\ref{cor:linear-fiber}, the inline code is binary because $\Sigma=\{a_0,a_1\}$. Its decoder is weak and adaptive, has perfect completeness, and makes two queries, one to the inline prefix and one to the proof suffix. The success probability is $1-\beta>1/2$. The radius in~\eqref{eq:delta0} is below half the minimum relative distance, so the nearby codeword is unique. The weak adaptive two-query lower bound of Block et al.~\cite[Theorem~1]{BlockEtAl2026} gives exponential block length. Since the inline prefix has length $d=O(K)$, an exponential lower bound on $d+P$ implies $P=2^{\Omega(K)}$.
\end{proof}
\section{Test-selection information required by near-linear proofs}\label{sec:quantitative}

Theorem~\ref{thm:main} treats the case $H(M\mid S)=\Omega(K)$. For near-linear proof length, applying Theorem~\ref{thm:fiber-rldc} at $d=\fibdim(M;S)$ yields an explicit bound even when the residual entropy is sublinear.

\begin{theorem}[Residual entropy under near-linear proof length]\label{thm:quantitative-info}
Fix $A,Q,\rho,\sigma$, with $A\ge2$, $Q\ge1$, $0<\rho<1$, and $0<\sigma\le1$, and put
\begin{equation}
  a=\left\lceil\frac{Q}{\sigma}\right\rceil.
  \label{eq:a-def}
\end{equation}
Consider a family of fixed-layout verifiers satisfying~\eqref{eq:completeness}--\eqref{eq:soundness} such that, on every common-test fiber, the extraction in Lemma~\ref{lem:extraction} is nonadaptive with at most $Q$ queries and success parameter at least $\sigma$. If, for a fixed $c\ge0$,
\begin{equation}
  P\le K(\log_2 K)^c,
  \label{eq:nearlinear}
\end{equation}
then, for all sufficiently large $K$,
\begin{equation}
  H(M\mid S)
  \le
  C\,K^{a/(a+1)}
  (\log_2 K)^{3+ac/(a+1)},
  \label{eq:residual-explicit}
\end{equation}
where $C$ depends only on $A,Q,\rho,\sigma$, and $c$.

If $M$ is uniform on $\bits^K$, then
\begin{equation}
  I(M;S)
  \ge
  K-C\,K^{a/(a+1)}
  (\log_2 K)^{3+ac/(a+1)}.
  \label{eq:info-explicit}
\end{equation}
\end{theorem}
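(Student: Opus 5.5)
Let $D=\fibdim(M;S)$. The plan is to bound $D$ from above using Theorem~\ref{thm:fiber-rldc} at $d=D$, and then to convert that bound into an entropy bound through Theorem~\ref{thm:entropy-fiber}. First dispose of trivial cases. If $D<d_0(A)$, a constant, then $\log_2\Bbin_K(D)=O(\log K)$ and the claim holds. Otherwise $D\ge d_0(A)$, and some fiber shatters $D$ coordinates. Lemma~\ref{lem:extraction} then yields a nonadaptive code of message length $D$ and block length $D+P\le K+K(\log K)^c\le 2K(\log K)^c$. By hypothesis this code has at most $Q$ queries, success parameter $\sigma$, and constant radius $\delta_0$. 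Since $D+P\ge P>\rho$-dependent constant times $D$ (Singleton) and $D+P\to\infty$ whenever $D$ is large, the radius condition~\eqref{eq:fiber-radius} holds once $D+P$ exceeds a constant depending on $A,Q,\rho,\sigma$. So either $D+P$ is bounded, giving a trivially small $D$, or Theorem~\ref{thm:fiber-rldc} applies.

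In the main case, \eqref{eq:fiber-bound} gives
\[
\left(\frac{\sigma^2 D}{38Q^4\log_2^2A\,\log_2^2D}\right)^{(a+1)/a}\le D+P\le 2K(\log_2K)^c .
\]
Raising both sides to the power $a/(a+1)$ and using $\log_2 D\le\log_2 K$ gives
\[
D\le C_1\,K^{a/(a+1)}(\log_2K)^{2+ac/(a+1)},
\]
with $C_1$ depending only on $A,Q,\sigma$. Recall that $a=\lceil Q/\sigma\rceil$ is the exponent appearing in $\Lrldc$, since the lemma's parameters are at least as good as $Q,\sigma$. One could worry that fewer queries or a larger success parameter changes $a$. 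But the GGS bound is monotone: a decoder with $\le Q$ queries and success $\ge\sigma$ is also a $(Q,\delta,\sigma)$ decoder. So we simply apply~\eqref{eq:GGS} with the stated $Q,\sigma$.

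Finally, Theorem~\ref{thm:entropy-fiber} together with $\Bbin_K(D)\le (K+1)^D$, or $(eK/D)^D$, gives
\[
H(M\mid S)\le D\log_2(K+1)\le C\,K^{a/(a+1)}(\log_2K)^{3+ac/(a+1)},
\]
which is \eqref{eq:residual-explicit}. The extra logarithm accounts for the exponent $3$. For uniform $M$, $I(M;S)=K-H(M\mid S)$ yields \eqref{eq:info-explicit}.

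The main obstacle I expect is the bookkeeping around the side conditions of Theorem~\ref{thm:fiber-rldc}, namely $D\ge d_0(A)$ and the radius condition~\eqref{eq:fiber-radius}, uniformly in $K$. The cleanest route is a threshold argument. Let $D^*$ be a constant, depending on $A,Q,\rho,\sigma$, such that $D\ge D^*$ forces $D+P\ge D$ large enough that $\delta_0>(D+P)^{-\sigma/(2Q)}$. Since $\delta_0$ is a positive constant, such a $D^*$ exists. For $D<D^*$ the bound is $O(\log K)$, which is absorbed into $C$ for large $K$.
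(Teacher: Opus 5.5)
Your proposal is correct and follows the paper's proof essentially step for step. It uses the same case split on $D<d_0(A)$ and on failure of the radius condition (which forces $D+P\le\delta_0^{-2Q/\sigma}=O(1)$), then Theorem~\ref{thm:fiber-rldc} at $d=D$ against $D+P\le 2K(\log_2K)^c$, and finally Sauer--Shelah, which supplies the extra logarithm. Your explicit remark that a decoder with at most $Q$ queries and success at least $\sigma$ qualifies for the GGS bound with the stated parameters is a harmless clarification the paper leaves implicit.
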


\begin{proof}
Let $D=\fibdim(M;S)$. If $D<d_0(A)$, Theorem~\ref{thm:entropy-fiber} gives $H(M\mid S)=O_{A}(\log K)$, which is stronger than~\eqref{eq:residual-explicit}. We may therefore assume $D\ge d_0(A)$.

If the radius condition fails at $d=D$, then
\[
  \delta_0\le(D+P)^{-\sigma/(2Q)},
\]
so
\[
  D+P\le\delta_0^{-2Q/\sigma}=O_{A,Q,\rho,\sigma}(1).
\]
Thus $D=O(1)$, and Theorem~\ref{thm:entropy-fiber} gives $H(M\mid S)=O(\log K)$.

Otherwise Theorem~\ref{thm:fiber-rldc}, applied with $d=D$, yields
\begin{equation}
  D+P
  \ge
  c_0\frac{D^{1+1/a}}{(\log_2 D)^{2+2/a}},
  \label{eq:D-lower}
\end{equation}
for a constant $c_0=c_0(A,Q,\sigma)>0$. Since $D\le K$ and~\eqref{eq:nearlinear} holds,
\[
  D+P\le2K(\log_2 K)^c
\]
for all sufficiently large $K$. Combining this with~\eqref{eq:D-lower} gives
\begin{equation}
  D
  \le
  C_1 K^{a/(a+1)}
  (\log_2 K)^{2+ac/(a+1)}.
  \label{eq:D-upper}
\end{equation}
The Sauer--Shelah estimate
\[
  \Bbin_K(D)\le\left(\frac{eK}{D}\right)^D
  \qquad (1\le D\le K)
\]
and Theorem~\ref{thm:entropy-fiber} imply
\[
  H(M\mid S)
  \le D\log_2\!\left(\frac{eK}{D}\right)
  \le D\log_2(eK).
\]
Substituting~\eqref{eq:D-upper} proves~\eqref{eq:residual-explicit}. For uniform $M$, use $I(M;S)=K-H(M\mid S)$.
\end{proof}

For fixed $Q$ and $\sigma$, the error term in~\eqref{eq:info-explicit} is $o(K)$. Thus $I(M;S)=K-o(K)$, and the error term records the dependence on the extracted decoder's query complexity and on the proof overhead.

\begin{corollary}[Selector support and verifier-state size]\label{cor:configuration}
Under the hypotheses of Theorem~\ref{thm:quantitative-info} with uniform $M$, define
\[
  R_{a,c}(K)=K^{a/(a+1)}(\log_2 K)^{3+ac/(a+1)}.
\]
Then
\begin{equation}
  |\supp(S)|\ge 2^{K-O(R_{a,c}(K))}.
  \label{eq:support-lower}
\end{equation}
More generally, suppose target-dependent verifier state $T$ determines the selected-test profile, so that $S=f(T)$. Then
\begin{equation}
  I(M;T)\ge I(M;S)\ge K-O(R_{a,c}(K)).
  \label{eq:config-info}
\end{equation}
If $T$ has at most $2^b$ possible values, then
\begin{equation}
  b\ge K-O(R_{a,c}(K)).
  \label{eq:config-bits}
\end{equation}
\end{corollary}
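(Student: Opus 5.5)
The plan is to derive all three displays from Theorem~\ref{thm:quantitative-info} together with standard entropy facts. No new combinatorics is needed.

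For the support bound~\eqref{eq:support-lower}, I start from the fact that $M$ is uniform, so $I(M;S)=K-H(M\mid S)$. Theorem~\ref{thm:quantitative-info} then gives $I(M;S)\ge K-C\,R_{a,c}(K)$, where $C$ depends only on $A,Q,\rho,\sigma,c$; this is exactly the $O(R_{a,c}(K))$ term. Mutual information is bounded by the entropy of either argument, so $I(M;S)\le H(S)\le\log_2|\supp(S)|$. Combining the two inequalities and exponentiating yields $|\supp(S)|\ge 2^{K-O(R_{a,c}(K))}$. If $\supp(S)$ is infinite, the claim is trivial. The only care point is that $S$ is discrete, so $H(S)\le\log_2|\supp(S)|$ is valid; this is part of the model in~\eqref{eq:test-profile}.

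For~\eqref{eq:config-info}, I use that $S=f(T)$ is a deterministic function of $T$. Hence $M\to T\to S$ is a Markov chain, and the data processing inequality gives $I(M;T)\ge I(M;S)$. Equivalently, $H(M\mid T)\le H(M\mid S)$, because conditioning on the finer variable $T$ cannot increase entropy. The second inequality in~\eqref{eq:config-info} is the bound from the previous paragraph. Here I need to check that $T$ is discrete, so that the quantities are well defined. The corollary calls $T$ a verifier state and this matches the abstract's hypothesis; I will state this assumption explicitly.

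For~\eqref{eq:config-bits}, I note that if $T$ takes at most $2^b$ values, then $I(M;T)\le H(T)\le b$. Chaining this with~\eqref{eq:config-info} gives $b\ge K-O(R_{a,c}(K))$.

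I do not expect a real obstacle. The only subtle point is bookkeeping: the implicit constant in $O(R_{a,c}(K))$ must be the constant $C$ from Theorem~\ref{thm:quantitative-info}. That constant is uniform over the family, and the bounds hold only for sufficiently large $K$, so the corollary inherits the same "for all sufficiently large $K$" qualifier.
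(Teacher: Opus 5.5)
Your proposal is correct and follows the paper's proof essentially step for step. You combine Theorem~\ref{thm:quantitative-info} with $I(M;S)\le H(S)\le\log_2|\supp(S)|$, use data processing along the Markov chain $M\to T\to S$, and then use $I(M;T)\le H(T)\le b$. Your added remarks on discreteness and on inheriting the ``sufficiently large $K$'' qualifier are harmless bookkeeping.
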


\begin{proof}
Theorem~\ref{thm:quantitative-info} and
\[
  I(M;S)\le H(S)\le\log_2|\supp(S)|
\]
give~\eqref{eq:support-lower}. If $S=f(T)$, the Markov chain $M\to T\to S$ and data processing give $I(M;T)\ge I(M;S)$. Finally, $I(M;T)\le H(T)\le b$ when $|\supp(T)|\le2^b$.
\end{proof}

\begin{proposition}[Information in target-dependent verifier state]\label{prop:capture}
Suppose an implementation has a fixed-layout proof map $\Pi:\bits^K\to\Sigma^P$ and discrete target-dependent verifier state $T$. For each coordinate $i$, let the local-test index used by both claim branches be $g_i(T)$, where the index may encode both branch descriptions. Assume that, for every $(m,t)\in\supp(M,T)$ and every coordinate $i$, the induced branches satisfy the completeness and target-relative soundness conditions~\eqref{eq:completeness}--\eqref{eq:soundness}. Then
\[
  S=(g_1(T),\ldots,g_K(T))
\]
defines a selected-test profile satisfying $I(M;S)\le I(M;T)$. Hence any information lower bound for $S$ also applies to $T$, whether or not the implementation stores the profile explicitly.
\end{proposition}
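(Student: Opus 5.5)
The plan is to check that $S=(g_1(T),\ldots,g_K(T))$ satisfies every requirement placed on a selected-test profile in Section~\ref{sec:model}. After that, the inequality follows from data processing, exactly as in Corollary~\ref{cor:configuration}.

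First, $S$ is a discrete random variable, because it is a deterministic function $f(T)=(g_1(T),\ldots,g_K(T))$ of the discrete variable $T$. It may be randomized conditional on $M$, through whatever randomness $T$ has, which is allowed by~\eqref{eq:test-profile}.

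Next, I verify~\eqref{eq:completeness}--\eqref{eq:soundness} for every $(m,s)\in\supp(M,S)$. Given such a pair, $\Pr[M=m,S=s]=\sum_{t:f(t)=s}\Pr[M=m,T=t]>0$, so some $t$ with $f(t)=s$ has $(m,t)\in\supp(M,T)$. By hypothesis, the branches $V^{g_i(t)}_{i,b}=V^{s_i}_{i,b}$ satisfy perfect completeness at $\Pi(m)$ and target-relative wrong-claim soundness in the $\rho P$-ball around $\Pi(m)$, for every $i$. The branches are determined by the index $s_i$ alone. So the conditions hold for $(m,s)$ no matter which preimage $t$ we use; this is why the index must encode both branch descriptions.

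Finally, $S=f(T)$ gives the Markov chain $M\to T\to S$, and data processing yields $I(M;S)\le I(M;T)$. Consequently, every result in the paper stated for a selected-test profile applies to $S$, including Theorem~\ref{thm:quantitative-info} and Corollary~\ref{cor:configuration}. Any lower bound on $I(M;S)$ therefore transfers to $I(M;T)$. None of this uses whether $S$ is stored explicitly, only that it is a function of $T$.

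The only delicate step is the support argument: showing that every $(m,s)$ in the joint support of $(M,S)$ lifts to some $(m,t)$ in the support of $(M,T)$. Since $T$ is discrete, this is immediate from the summation above, so I expect no real obstacle.
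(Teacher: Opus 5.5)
Your proposal is correct and follows the paper's proof: $S$ is a deterministic function of $T$, so $M\to T\to S$ is a Markov chain, and data processing gives $I(M;S)\le I(M;T)$. Your lifting argument, that every $(m,s)\in\supp(M,S)$ comes from some $(m,t)\in\supp(M,T)$, makes explicit the check that $S$ is a valid selected-test profile, which the paper takes as immediate from the hypothesis.
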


\begin{proof}
The selected-test profile is a deterministic function of $T$, so the variables form the Markov chain $M\to T\to S$. Data processing gives $I(M;S)\le I(M;T)$.
\end{proof}

The state $T$ may be an advice string, a seed, local state, a routing table, or a circuit description, provided it determines the selected tests. The conclusion concerns only the target-dependent state represented by $T$.

\section{Finite randomness and nonadaptive exposure}\label{sec:finite}

We now allow imperfect completeness. Fix constants $0\le\beta<\alpha\le1$ and
\begin{equation}
  \Prb\!\left[V_{i,m_i}^{s_i}(\Pi(m))=1\right]
  \ge\alpha>\beta,
  \label{eq:imperfect-completeness}
\end{equation}
and suppose each branch uses at most $r$ random bits and at most $q\ge1$ adaptive proof queries. Define
\begin{equation}
  T_q(A)=\sum_{j=0}^{q-1}A^j,
  \qquad
  Q_{\mathrm{exp}}=1+2^{r+1}T_q(A).
  \label{eq:Qexp}
\end{equation}

\begin{lemma}[Nonadaptive simulation by decision-tree exposure]\label{lem:decision-tree-exposure}
On every shattered common-test fiber of dimension $d\ge d_0(A)$, the inline code~\eqref{eq:inline-code} has a nonadaptive relaxed decoder with perfect completeness, zero wrong-output probability in the promise ball, and query complexity at most $Q_{\mathrm{exp}}$.
\end{lemma}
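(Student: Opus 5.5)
Fix a shattered fiber $F_s$ with shattered set $I=\{i_1,\ldots,i_d\}$ and representatives $m_x$, and keep the inline code $\widehat C_s$ of~\eqref{eq:inline-code}. The first step redoes the distance part of Lemma~\ref{lem:extraction} under imperfect completeness~\eqref{eq:imperfect-completeness}. If $x_j\ne x'_j$ and $\distH(C_s(x),C_s(x'))\le\rho P$, then the branch $V_{i_j,x'_j}^{s_{i_j}}$ accepts $C_s(x')$ with probability at least $\alpha$. Target-relative soundness around $\Pi(m_x)$ bounds the same probability by $\beta<\alpha$, which is a contradiction. Hence distinct suffixes are at distance greater than $\rho P$. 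The Singleton computation then goes through verbatim, so the radius $\delta_0$ of~\eqref{eq:delta0} puts every received word within suffix distance $\rho P/2$ of a unique codeword.

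The decoder on input $j$ is built as follows. It queries the inline coordinate $j$. For each claim $b\in\bits$ and each of the $2^r$ random tapes $\omega$, it views $V_{i_j,b}^{s_{i_j}}$ with tape $\omega$ as an adaptive $A$-ary decision tree of depth at most $q$. It queries every proof location labelling an internal node of that tree. One tree has at most $T_q(A)=\sum_{j<q}A^j$ internal nodes, so there are at most $2\cdot2^r\cdot T_q(A)$ proof queries in total, plus the inline one, which gives $Q_{\mathrm{exp}}$. All locations depend only on $j$ and $s$, so the decoder is nonadaptive. With every answer in hand, it can evaluate all $2^{r+1}$ branch runs exactly.

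The decision rule is the following. If the inline symbol is not in $\{a_0,a_1\}$, output $\bot$. If it is $a_b$, compute the exact acceptance probability $p_b$ of branch $b$ on the received suffix. Output $b$ if $p_b\ge\alpha$ and $\bot$ otherwise. The rule is deterministic, which is what yields zero wrong-output probability.

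Completeness is then immediate from~\eqref{eq:imperfect-completeness}, since the true claim branch has $p_{x_j}\ge\alpha$ on the codeword, so the output is $x_j$ with probability one. In the promise ball, the suffix lies within $\rho P$ of $\Pi(m_x)$. A wrong inline symbol $a_{1-x_j}$ then selects a branch with $p_{1-x_j}\le\beta<\alpha$ by~\eqref{eq:soundness}, so the output is $\bot$, never the wrong bit. Thus the success parameter is $1$.

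The main obstacle is making sure the exposure count is right: the adaptive queries of a branch must be fully determined by the tree, including both branches and all tapes. One could save queries by exposing only the tree of the branch named by the inline symbol, but that would make the decoder adaptive. Exposing both trees is what yields the factor $2^{r+1}$. A second point to check is that the threshold rule needs $\alpha$ and $\beta$ only as fixed constants, with no randomness left in the decoder. Distinguishing $p\ge\alpha$ from $p\le\beta$ is exact because $p$ is computed from the finitely many tapes.
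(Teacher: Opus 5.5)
Your proposal is correct and follows the paper's proof essentially step for step. It redoes the distance argument under imperfect completeness, exposes both claims' depth-$q$ decision trees over all $2^r$ tapes nonadaptively, and decides deterministically from the exact acceptance probability. The only difference is that you threshold at $\alpha$ with $\ge$, whereas the paper uses $t=(\alpha+\beta)/2$ with strict inequality; any threshold separating $p\ge\alpha$ from $p\le\beta$ works equally well.
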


\begin{proof}
The distance argument in Lemma~\ref{lem:extraction} remains valid: if two suffixes encoding different requested bits were within distance $\rho P$, the same claim branch would have acceptance probability at least $\alpha$ by~\eqref{eq:imperfect-completeness} and at most $\beta$ by~\eqref{eq:soundness}. Hence the constant inline radius $\delta_0$ remains available.

For each claim $b\in\bits$ and each of at most $2^r$ random tapes, expose every internal node of the corresponding depth-$q$, $A$-ary proof-answer decision tree. Each tree contains at most $T_q(A)$ queried nodes. Query the inline coordinate and the union of all exposed proof positions for both claims. This uses at most $1+2^{r+1}T_q(A)$ queries, all chosen independently of the received answers.

Once the answers are known, an inline symbol $a_b$ selects the corresponding claim. Simulating that claim on all random tapes then gives its exact acceptance probability. If the inline symbol is outside $\{a_0,a_1\}$, output $\bot$. Otherwise use the threshold
\[
  t=\frac{\alpha+\beta}{2}.
\]
Output $b$ if the exact acceptance probability exceeds $t$, and output $\bot$ otherwise. On an honest word, the correct branch has acceptance probability greater than $t$, so the new decoder has perfect completeness. In the promise ball, changing the inline symbol to the opposite bit symbol selects a wrong-claim branch with acceptance probability at most $\beta<t$ and therefore yields $\bot$. An unchanged inline symbol yields the correct bit or $\bot$.
\end{proof}

This is an information-theoretic simulation: running time is unrestricted, and only the number of nonadaptive oracle queries is counted.

\begin{theorem}[Near-linear proofs require large exposure complexity]\label{thm:exposure}
Let $M$ be uniform on $\bits^K$, and assume~\eqref{eq:soundness} and~\eqref{eq:imperfect-completeness}. Fix $A\ge2$, $0<\rho<1$, $0\le\beta<\alpha\le1$, $0\le\gamma<1$, $0<\eps<\vartheta_\gamma$, and $c\ge0$. Suppose $I(M;S)\le\gamma K$, while $r=r(K)$ and $q=q(K)\ge1$ may vary with $K$. If
\begin{equation}
  P\le K(\log_2 K)^c
  \label{eq:nearlinear-finite}
\end{equation}
along an infinite sequence, then
\begin{equation}
  Q_{\mathrm{exp}}
  =\Omega\!\left(\frac{\log K}{\log\log K}\right),
  \label{eq:Qexp-lower}
\end{equation}
and therefore
\begin{equation}
  r+\log_2T_q(A)
  \ge
  \log_2\log K-\log_2\log\log K-O(1).
  \label{eq:r-q-lower}
\end{equation}
\end{theorem}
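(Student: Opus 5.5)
The argument combines Corollary~\ref{cor:linear-fiber} with Lemma~\ref{lem:decision-tree-exposure}, and then inverts the Goldberg--Gur--Saraogi bound~\eqref{eq:GGS} with $Q=Q_{\mathrm{exp}}$ allowed to grow. Because $I(M;S)\le\gamma K$, Corollary~\ref{cor:linear-fiber} gives, for all large $K$, a common-test fiber that shatters $d=\lfloor(\vartheta_\gamma-\eps)K\rfloor=\Theta(K)$ coordinates, and $d\ge d_0(A)$. Lemma~\ref{lem:decision-tree-exposure} then turns the inline code on this fiber into a nonadaptive weak relaxed decoder with message length $d$, block length $n=d+P$, alphabet size $A$, success parameter $\sigma=1$ (zero wrong output), query complexity $Q=Q_{\mathrm{exp}}$, and the constant radius $\delta_0=\delta_0(A,\rho)$ from~\eqref{eq:delta0}. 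So $a=\lceil Q/\sigma\rceil=Q$.

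The proof is by contradiction. Suppose along a subsequence with~\eqref{eq:nearlinear-finite} we had $Q\le \kappa\log K/\log\log K$ for a small constant $\kappa$ to be fixed later. There are two things to check.

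First, the radius condition~\eqref{eq:GGSradius}, namely $\delta_0>n^{-1/(2Q)}$. Since $n\ge d=\Theta(K)$, we have $n^{-1/(2Q)}\le \exp(-\ln K/(2Q)+O(1/Q))$. With $Q=o(\log K)$ the exponent tends to $-\infty$, so the condition holds for large $K$. If instead $Q\ge \varepsilon'\log K$, the conclusion~\eqref{eq:Qexp-lower} holds outright, so it is harmless to assume $Q=o(\log K)$ in this branch.

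Second, apply~\eqref{eq:GGS}:
\[
\frac{d}{\log_2^2 d}\le 38Q^4\log_2^2 A\; n^{1-1/(Q+1)}.
\]
Substitute $n\le 2K(\log_2K)^c$ and $d=\Theta(K)$, and divide by $n^{1-1/(Q+1)}$. This gives
\[
n^{1/(Q+1)}\le O\bigl(Q^4(\log K)^{2+c}\bigr).
\]
Since $n\ge d=\Theta(K)$, the left side is at least $K^{1/(Q+1)}$, up to a constant. Taking logarithms yields
\[
\frac{\log K}{Q+1}\le (2+c)\log\log K+4\log Q+O(1)=O(\log\log K),
\]
because $Q\le\log K$. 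Hence $Q+1\ge \log K/(C\log\log K)$ with $C=C(c,A,\rho,\alpha,\beta,\gamma,\eps)$. This contradicts the assumption once $\kappa<1/C$, and it proves~\eqref{eq:Qexp-lower} along the sequence.

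For~\eqref{eq:r-q-lower}, use $Q_{\mathrm{exp}}=1+2^{r+1}T_q(A)\le 4\cdot 2^{r}T_q(A)$, which holds because $T_q(A)\ge1$. Taking $\log_2$ of~\eqref{eq:Qexp-lower} gives
\[
r+\log_2T_q(A)\ge \log_2\log K-\log_2\log\log K-O(1).
\]

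The main obstacle is keeping track of constants that depend on $Q$ once $Q$ is unbounded. The factors to watch are the $Q^4$ term and the radius condition, and both must stay polylogarithmic or $o(\log K)$. The case split on whether $Q\ge\varepsilon'\log K$ handles this. One also has to confirm that the hypotheses of the GGS bound allow $Q$ to depend on $n$, which they do because the inequality is explicit.
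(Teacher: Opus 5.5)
Your proposal is correct and follows essentially the paper's route. You obtain a nonadaptive decoder with $\sigma=1$ and $Q=Q_{\mathrm{exp}}$ from Corollary~\ref{cor:linear-fiber} and Lemma~\ref{lem:decision-tree-exposure}, invert the Goldberg--Gur--Saraogi bound against $N=d+P\le 2K(\log_2 K)^c$, and take logarithms to get~\eqref{eq:r-q-lower}. The only difference is that the paper splits on whether the radius condition holds, whereas you show by contradiction that a small $Q_{\mathrm{exp}}$ forces it to hold; this is the contrapositive of the paper's case in which the radius condition fails.
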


\begin{proof}
Corollary~\ref{cor:linear-fiber} supplies $d=\Theta(K)$ shattered coordinates, so $d\ge d_0(A)$ for all sufficiently large $K$. Put $N=d+P$. By~\eqref{eq:nearlinear-finite},
\begin{equation}
  N\le C_1d(\log_2d)^c.
  \label{eq:N-upper}
\end{equation}
Lemma~\ref{lem:decision-tree-exposure} gives a nonadaptive decoder with perfect completeness, $Q_{\mathrm{exp}}$ queries, and success parameter $1$.

If the GGS radius condition holds, then
\begin{equation}
  N\ge
  \left(
  \frac{d}
       {38Q_{\mathrm{exp}}^4\log_2^2A\log_2^2d}
  \right)^{1+1/Q_{\mathrm{exp}}}.
  \label{eq:N-lower-finite}
\end{equation}
After division by $d$ and taking base-two logarithms,~\eqref{eq:N-lower-finite} gives
\[
  \log_2(N/d)
  \ge
  \frac{\log_2 d}{Q_{\mathrm{exp}}}
  -O\!\left(\log Q_{\mathrm{exp}}+\log\log d\right).
\]
On the other hand,~\eqref{eq:N-upper} gives $\log_2(N/d)\le O(\log\log d)$. If $Q_{\mathrm{exp}}\le\eta\log d/\log\log d$ for a sufficiently small constant $\eta=\eta(A,c)>0$, the first term in the lower bound dominates the $O(\log\log d)$ terms, a contradiction.

If the radius condition fails, then
\[
  \delta_0\le N^{-1/(2Q_{\mathrm{exp}})},
\]
which implies
\[
  Q_{\mathrm{exp}}
  \ge\frac{\log N}{2\log(1/\delta_0)}
  =\Omega_{A,\rho}(\log K).
\]
Both cases prove~\eqref{eq:Qexp-lower}. Taking base-two logarithms in~\eqref{eq:Qexp} gives~\eqref{eq:r-q-lower}.
\end{proof}

The quantity $Q_{\mathrm{exp}}$ counts queries made by the exhaustive nonadaptive simulation, not by the original adaptive verifier. A different adaptive decoder may use fewer queries.

\section{Global selection in list-sound decodable PCPs}\label{sec:dpcp}

For every $\epsilon>0$, Gur, Minzer, Weissenberg, and Zheng construct a two-query decodable PCP with near-linear length, constant alphabet, perfect completeness, and $(L,\epsilon)$ list-decoding soundness, where $L=\poly(1/\epsilon)$~\cite[Theorem~3.12]{GurMinzerWeissenbergZheng2026}. In the notation of their Definition~3.3, the relevant quantifiers are
\begin{equation}
  \forall T_A\;\exists W(T_A),\ |W(T_A)|\le L\;\forall T_B.
  \label{eq:GMWZ-quantifiers}
\end{equation}
If $W(T_A)=\{w_1,\ldots,w_L\}$, their soundness guarantee bounds the probability of simultaneous acceptance and a decoded symbol outside
\[
  \{(w_1)_t,\ldots,(w_L)_t\}
\]
at the sampled coordinate $t$. Thus the decoded symbol need only lie in the coordinate projection of the list. Different coordinates may be explained by different list members; the guarantee does not choose one target-bound word that simultaneously explains every coordinate.

Consider instead a global rule that chooses the selected-test profile from a fixed menu independent of both the target and the proof word. The information bound then controls the size of that menu.

\begin{corollary}[A fixed global menu requires target-scale information]\label{cor:fixed-menu}
Fix $A\ge2$, $0<\rho<1$, $Q\ge1$, $0<\sigma\le1$, and $c\ge0$. Consider a strengthening of the list-sound interface for a uniform $K$-bit target $M$ with the following properties:
\begin{enumerate}[leftmargin=2.1em,label=\textup{(\alph*)}]
\item each target has a proof word $\Pi(M)\in\Sigma^P$ in a fixed proof layout, where $|\Sigma|=A$;
\item the target-relative conditions~\eqref{eq:completeness}--\eqref{eq:soundness} hold with $\beta=1-\sigma$;
\item both claim branches are nonadaptive and the extracted decoder has query complexity at most $Q$; and
\item the selected-test profile $S$ takes values in a fixed set $\mathcal S=\{s^{(1)},\ldots,s^{(L)}\}$ that is independent of the target and proof word.
\end{enumerate}
If $P\le K(\log_2K)^c$ and $a=\lceil Q/\sigma\rceil$, then
\begin{equation}
  \log_2 L
  \ge
  K-O\!\left(
  K^{a/(a+1)}(\log_2K)^{3+ac/(a+1)}
  \right).
  \label{eq:list-lower}
\end{equation}
In particular, $L\ge2^{K-o(K)}$.
\end{corollary}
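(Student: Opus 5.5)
The plan is to reduce Corollary~\ref{cor:fixed-menu} to Theorem~\ref{thm:quantitative-info} and Corollary~\ref{cor:configuration}. The menu size then bounds the support of $S$, and hence the information $S$ carries about $M$.

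First, I check that hypotheses (a)--(c) place the system within the scope of Theorem~\ref{thm:quantitative-info}. By (a), the proof map $\Pi:\bits^K\to\Sigma^P$ has fixed layout over an alphabet of size $A$. By (b), conditions~\eqref{eq:completeness}--\eqref{eq:soundness} hold for every $(m,s)\in\supp(M,S)$ with $\beta=1-\sigma$, so Lemma~\ref{lem:extraction} yields success parameter $1-\beta=\sigma$ on every common-test fiber. By (c), the extracted decoder is nonadaptive with at most $Q$ queries. Together with $P\le K(\log_2K)^c$, Theorem~\ref{thm:quantitative-info} applies directly. For uniform $M$ it gives
\[
  I(M;S)\ge K-C\,K^{a/(a+1)}(\log_2K)^{3+ac/(a+1)}
\]
for all sufficiently large $K$, with $C$ depending only on $A,Q,\rho,\sigma,c$.

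Second, I use (d). Since $S$ takes values in the fixed set $\mathcal S$ of size $L$, we have $\supp(S)\subseteq\mathcal S$, and therefore
\[
  I(M;S)\le H(S)\le\log_2|\supp(S)|\le\log_2 L.
\]
This is exactly the support bound~\eqref{eq:support-lower} in Corollary~\ref{cor:configuration}. Combining it with the first step gives~\eqref{eq:list-lower}.

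Third, for fixed $a$ the error term $K^{a/(a+1)}(\log K)^{3+ac/(a+1)}$ is $o(K)$, so $L\ge2^{K-o(K)}$. Small $K$ is absorbed into the $O(\cdot)$ constant.

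The only real subtlety is the role of independence from the proof word in (d). Because $\Pi$ is a function of $M$, a menu that depended on the proof would be target-dependent data. Its description would then have to be included in the measured state, as in Proposition~\ref{prop:capture}. With the menu fixed in advance, the bound on $|\supp(S)|$ is immediate, and nothing beyond the chain of earlier results is needed.
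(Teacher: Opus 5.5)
Your proposal is correct and follows essentially the same route as the paper: the fixed menu gives $I(M;S)\le H(S)\le\log_2 L$, and hypotheses (a)--(c) let you invoke the lower bound on $I(M;S)$ from Theorem~\ref{thm:quantitative-info}. Your explicit check that (b) and (c) supply the success parameter $\sigma$ and the query bound $Q$ required by that theorem is a useful addition to the paper's two-line argument.
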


\begin{proof}
Since $|\supp(S)|\le L$,
\[
  I(M;S)\le H(S)\le\log_2L.
\]
The corresponding lower bound on $I(M;S)$ follows from Theorem~\ref{thm:quantitative-info}, proving~\eqref{eq:list-lower}.
\end{proof}

In the GMWZ construction, however, the list $W(T_A)$ may depend on $T_A$, and the coordinatewise list-decoding quantifier remains in place. An index $J\in[L]$ therefore need not specify the same selected-test profile for different targets. If a global mechanism has $S=f(J,W(T_A),T)$ for additional target-dependent state $T$, the information bound applies to the joint state $(J,W(T_A),T)$, not to $J$ alone.

\section{Boundary examples, scope, and open directions}\label{sec:scope}

\subsection{The two information endpoints}

At one endpoint, the test selection is independent of the target. If $S$ is independent of $M$, then every nonempty common-test fiber equals $\supp(M)$. For a full-support target distribution, each such fiber is the entire cube and the RLDC extraction may use all $K$ target coordinates.

At the other endpoint, the selected tests can carry the entire target and make the proof essentially vacuous.

\begin{example}[Full-information selected-test profile and constant proof]\label{ex:full-info}
Fix any $a_\star\in\Sigma$, let $\Pi(m)=a_\star\in\Sigma^1$, and set the selected local-test index at coordinate $i$ to $s_i=m_i$. Define $V_{i,b}^{s_i}$ to ignore its oracle and accept exactly when $b=s_i$. Then~\eqref{eq:completeness}--\eqref{eq:soundness} hold with $P=1$, $q=0$, and every $0<\rho<1$, $0\le\beta<1$, while
\begin{equation}
  S=M,
  \qquad I(M;S)=K,
  \qquad \fibdim(M;S)=0.
  \label{eq:full-info-endpoint}
\end{equation}
\end{example}

The lower bound and Example~\ref{ex:full-info} leave a sublinear window below $K$ bits of test-selection information. The explicit deficit in Theorem~\ref{thm:quantitative-info} need not be tight, and the behavior inside that window remains open.

\subsection{Parameter regimes and limitations}

\begin{table}[H]
\centering
\footnotesize
\begin{tabularx}{\textwidth}{@{}>{\raggedright\arraybackslash}p{0.24\textwidth}>{\raggedright\arraybackslash}p{0.28\textwidth}>{\raggedright\arraybackslash}X@{}}
\toprule
Selected verifier branches & Extracted decoder & Consequence under residual target entropy \\
\midrule
Nonadaptive, at most $q$ proof queries per claim
& Nonadaptive, $Q=1+2q$, success $1-\beta$
& Polynomially superlinear proof bound and an explicit residual-entropy bound for near-linear proofs.\\[0.4em]
At most $r$ random bits and $q$ adaptive proof queries per claim
& Nonadaptive simulation with perfect completeness, $Q_{\mathrm{exp}}=1+2^{r+1}T_q(A)$
& A near-linear proof requires $Q_{\mathrm{exp}}=\Omega(\log K/\log\log K)$.\\[0.4em]
Binary alphabet, one proof query per claim
& Weak adaptive two-query decoder, success $1-\beta>1/2$
& $P=2^{\Omega(K)}$.\\
\bottomrule
\end{tabularx}
\caption{Three consequences of RLDC extraction from a common-test fiber.}
\label{tab:regimes}
\end{table}

The bounds leave several regimes open. Beyond the explicit deficit in Theorem~\ref{thm:quantitative-info}, they give no further restriction when $I(M;S)=K-o(K)$. They rely on a fixed alphabet, a fixed proof coordinate set, a positive robustness radius, and pointwise wrong-claim soundness. Any target-dependent state used to choose the tests must be represented by $S$ or by a variable $T$ that determines $S$. Ordinary dPCP list soundness alone does not imply the target-relative robustness assumed here.

\subsection{Open directions}

\begin{enumerate}[leftmargin=2.1em]
\item \textbf{Average-case robustness.} Can an information-to-coding reduction survive when wrong-claim soundness holds only for most coordinates or on average over the target distribution? Such a result would connect the present worst-coordinate interface more directly to standard list-sound formulations.

\item \textbf{Intermediate-information constructions.} The full-information endpoint is attainable, while a fixed linear information deficit forces a superlinear proof. Matching upper bounds or sharper converses in the regime $I(M;S)=K-r(K)$ would determine whether the explicit deficit in Theorem~\ref{thm:quantitative-info} is close to optimal.

\item \textbf{Adaptive lower-bound inputs.} Stronger lower bounds for weak adaptive RLDCs at more than two queries could avoid exhaustive decision-tree exposure and separate intrinsic adaptive query complexity from the cost of exact nonadaptive simulation.
\end{enumerate}

\section*{Declarations}

\paragraph{Funding.}
This work was supported by the 2025 Google Research Grant from Google Asia Pacific Pte. Ltd.

\paragraph{Competing interests.}
The author has no relevant financial or non-financial interests to disclose.

\paragraph{Author contributions.}
Hongmin Li conceived the study, developed and verified the mathematical arguments, conducted the literature analysis, and wrote the manuscript.

\paragraph{Acknowledgements.}
This research used the \href{https://utelecon.adm.u-tokyo.ac.jp/en/research_computing/utokyo_azure/}{UTokyo Azure service}.

\paragraph{Data and code availability.}
No datasets were generated or analyzed in this study. The results are mathematical; all arguments needed to verify them are contained in the article, and no custom code is required.

\paragraph{Use of artificial intelligence.}
Generative AI was used for language editing, structural feedback, literature searches, and checks for gaps in the proofs. The author reviewed all resulting suggestions and takes full responsibility for the manuscript.

\begingroup
\small
\setlength{\bibsep}{2pt}
\bibliographystyle{plainnat}
\bibliography{references}

@article{BGHSV2006,
  author  = {Ben-Sasson, Eli and Goldreich, Oded and Harsha, Prahladh and Sudan, Madhu and Vadhan, Salil P.},
  title   = {Robust {PCP}s of Proximity, Shorter {PCP}s, and Applications to Coding},
  journal = {SIAM Journal on Computing},
  volume  = {36},
  number  = {4},
  pages   = {889--974},
  year    = {2006},
  doi     = {10.1137/S0097539705446810}
}

@article{DallAgnolGurLachish2023,
  author  = {Dall'Agnol, Marcel and Gur, Tom and Lachish, Oded},
  title   = {A Structural Theorem for Local Algorithms with Applications to Coding, Testing, and Verification},
  journal = {SIAM Journal on Computing},
  volume  = {52},
  number  = {6},
  pages   = {1413--1463},
  year    = {2023},
  doi     = {10.1137/21M1422781}
}

@inproceedings{KatzTrevisan2000,
  author    = {Katz, Jonathan and Trevisan, Luca},
  title     = {On the Efficiency of Local Decoding Procedures for Error-Correcting Codes},
  booktitle = {Proceedings of the Thirty-Second Annual ACM Symposium on Theory of Computing},
  pages     = {80--86},
  year      = {2000},
  doi       = {10.1145/335305.335315}
}

@article{KerenidisDeWolf2004,
  author  = {Kerenidis, Iordanis and de Wolf, Ronald},
  title   = {Exponential Lower Bound for 2-Query Locally Decodable Codes via a Quantum Argument},
  journal = {Journal of Computer and System Sciences},
  volume  = {69},
  number  = {3},
  pages   = {395--420},
  year    = {2004},
  doi     = {10.1016/j.jcss.2004.04.007}
}

@article{GurLachish2021,
  author  = {Gur, Tom and Lachish, Oded},
  title   = {On the Power of Relaxed Local Decoding Algorithms},
  journal = {SIAM Journal on Computing},
  volume  = {50},
  number  = {2},
  pages   = {788--813},
  year    = {2021},
  doi     = {10.1137/19M1307834}
}

@article{Sauer1972,
  author  = {Sauer, Norbert},
  title   = {On the Density of Families of Sets},
  journal = {Journal of Combinatorial Theory, Series A},
  volume  = {13},
  number  = {1},
  pages   = {145--147},
  year    = {1972},
  doi     = {10.1016/0097-3165(72)90019-2}
}

@article{DinurMeir2011,
  author  = {Dinur, Irit and Meir, Or},
  title   = {Derandomized Parallel Repetition via Structured {PCP}s},
  journal = {Computational Complexity},
  volume  = {20},
  number  = {2},
  pages   = {207--327},
  year    = {2011},
  doi     = {10.1007/s00037-011-0013-5}
}

@article{DinurHarsha2013,
  author  = {Dinur, Irit and Harsha, Prahladh},
  title   = {Composition of Low-Error 2-Query {PCP}s Using Decodable {PCP}s},
  journal = {SIAM Journal on Computing},
  volume  = {42},
  number  = {6},
  pages   = {2452--2486},
  year    = {2013},
  doi     = {10.1137/100788161}
}

@inproceedings{ApplebaumNir2023,
  author    = {Applebaum, Benny and Nir, Oded},
  title     = {Advisor-Verifier-Prover Games and the Hardness of Information-Theoretic Cryptography},
  booktitle = {2023 IEEE 64th Annual Symposium on Foundations of Computer Science (FOCS)},
  pages     = {539--555},
  year      = {2023},
  doi       = {10.1109/FOCS57990.2023.00039}
}

@article{BoyleKomargodskiVafa2025,
  author  = {Boyle, Elette and Komargodski, Ilan and Vafa, Neekon},
  title   = {Memory Checking Requires Logarithmic Overhead},
  journal = {Journal of the ACM},
  volume  = {72},
  number  = {2},
  pages   = {13:1--13:43},
  year    = {2025},
  doi     = {10.1145/3707202}
}

@inproceedings{GoldbergGurSaraogi2026,
  author    = {Goldberg, Guy and Gur, Tom and Saraogi, Sidhant},
  title     = {Nearly Tight Lower Bounds for Relaxed Locally Decodable Codes via Robust Daisies},
  booktitle = {Proceedings of the 58th Annual ACM Symposium on Theory of Computing},
  pages     = {2210--2217},
  year      = {2026},
  doi       = {10.1145/3798129.3800923},
  eprint    = {2511.21659},
  archivePrefix = {arXiv}
}

@article{BlockEtAl2026,
  author  = {Block, Alexander R. and Blocki, Jeremiah and Cheng, Kuan and Grigorescu, Elena and Li, Xin and Zheng, Yu and Zhu, Minshen},
  title   = {Exponential Lower Bounds for 2-Query Relaxed Locally Decodable Codes},
  journal = {arXiv preprint arXiv:2602.20278},
  year    = {2026},
  eprint  = {2602.20278},
  archivePrefix = {arXiv},
  primaryClass = {cs.IT}
}

@article{ChengLiMao2026,
  author  = {Cheng, Kuan and Li, Xin and Mao, Songtao},
  title   = {When Relaxation Does Not Help: {RLDC}s with Small Soundness Yield {LDC}s},
  journal = {arXiv preprint arXiv:2603.03717},
  year    = {2026},
  eprint  = {2603.03717},
  archivePrefix = {arXiv},
  primaryClass = {cs.IT}
}

@inproceedings{GurMinzerWeissenbergZheng2026,
  author    = {Gur, Tom and Minzer, Dor and Weissenberg, Guy and Zheng, Kai Zhe},
  title     = {3-Query {RLDC}s Are Strictly Stronger Than 3-Query {LDC}s},
  booktitle = {Proceedings of the 58th Annual ACM Symposium on Theory of Computing},
  pages     = {1489--1496},
  year      = {2026},
  doi       = {10.1145/3798129.3800857},
  eprint    = {2512.12960},
  archivePrefix = {arXiv}
}
\endgroup

\end{document}